\newtheorem{thm}{Theorem}[section]
\newtheorem{assumption}[thm]{Assumption}
\newtheorem{prop}[thm]{Proposition}
\newtheorem{definition}[thm]{Definition}
\newtheorem{example}[thm]{Example}
\newtheorem{remark}[thm]{Remark}
\newtheorem{protocol}[thm]{Protocol}
\newcommand{\M}{\mathcal{M}}
\renewcommand{\M}{\mathcal{M}}
\renewcommand{\O}{\mathcal{O}}
\newcommand{\R}{\mathbb{R}}
\newcommand{\Mat}{\mathrm{Mat}}
\newcommand{\Section}[1]{\section{#1}\setcounter{equation}{0}}
\newcommand{\openbox}{\leavevmode
  \hbox to.77778em{%
    \hfil\vrule
  \vbox to.675em{\hrule width.6em\vfil\hrule}%
  \vrule\hfil}} 
\newcommand{\proofname}{Proof}
\newenvironment{proof}[1][\proofname]{\par\normalfont
  \trivlist\item[\hskip\labelsep\itshape #1:]\ignorespaces
  }{\hspace*{1cm}\hspace*{\fill}\openbox \medskip\endtrivlist}
\title{Remarks on a Tropical Key Exchange System
}
\date{\today}%
\author{Dylan Rudy \qquad Chris Monico \\
  {\small Department of Mathematics and Statistics\vspace{-2mm}}\\
  {\small Texas Tech University\vspace{-2mm}}\\
  {\small {\em e-mail:\/} c.monico@ttu.edu }
  }
\begin{document}\maketitle
\thispagestyle{empty}
\begin{abstract}
  We consider a key-exchange protocol based on matrices over a tropical
  semiring which was recently proposed in \cite{grig19}. We show that
  a particular private parameter of that protocol can be recovered with a simple
  binary search, rendering it insecure.
\end{abstract}
{\em Keywords:} tropical algebra, public key exchange, cryptanalysis.\\
{\em Mathematics subject classification: } 15A80, 94A60.

\Section{Introduction}        \label{sec:intro}

  Let $S$ be any nonempty subset of $\R$ which is closed under addition.
  Define two operations $\oplus$ and $\otimes$ on $S$ by
  \begin{eqnarray*}
    a\oplus b &=& \min\{a, b\}, \\
    a\otimes b &=& a+b.
  \end{eqnarray*}
  Both operations are associative and commutative and $\otimes$ distributes over $\oplus$,
  and hence $S$ is a commutative semiring, called a {\em tropical semiring}. 
  The set $\M = \Mat_{k\times k}(S)$ of $k\times k$ matrices over $S$ is therefore a
  semiring with the induced operations
  \begin{eqnarray*}
    (a_{ij}) \oplus (b_{ij}) &=& (a_{ij}\oplus b_{ij}),\\
    (a_{ij}) \otimes (b_{ij}) &=& (c_{ij}), \hspace{12pt} \mbox{ where }
        c_{ij} = (a_{i1}\otimes b_{1j}) \oplus (a_{i2}\otimes b_{2j}) \oplus \dots \oplus (a_{ik}\otimes b_{kj}).
  \end{eqnarray*}

  In \cite{grig14}, the authors proposed two key exchange protocols based on the structure $\M$.
  Shortly after, an effective attack was given on one of those protocols in \cite{koto18}.
  Subsequently, a new key exchange protocol was proposed in \cite{grig19} (in fact, two new
  protocols, but they are very closely related to each other). It is this protocol that we
  consider in this paper.

  In \cite{grig19}, the authors 
  give two semigroup operations on $\M\times \M$
  each arising as a semidirect product induced by a specified action of these matrices on
  themselves.
  The two semigroup operations are given by
  \begin{eqnarray} \label{eq:op1}
    (M,G)\circ (S,H) &=& \Big(M\oplus S \oplus H\oplus (M\otimes H),\hspace{6pt} G\oplus H\oplus (G\otimes H)\Big),\\
    \label{eq:op2}
    (M,G) * (S,H)    &=& \Big( (H\otimes M^T)\oplus (M^T\otimes H)\oplus S, \hspace{6pt} G\otimes H \Big).
  \end{eqnarray}
  Note that for each of these operations, the first component of the product does not depend on $G$.
  This fact plays a key role in the two key exchange protocols they then propose (one corresponding
  to each operation):

  \begin{enumerate}
    \item Alice and Bob agree on public matrices $M,H\in \M$ whose entries
          are integers in the range $[-N,N]$, and they agree on a positive integer $K$.
          Alice selects a private positive integer $m<2^{K}$ and Bob selects a private positive integer $n<2^{K}$.
    \item Alice computes $(M,H)^m = (A, P_A)$ and sends $A$ to Bob.
    \item Bob computes $(M,H)^n = (B, P_B)$ and sends $B$ to Alice.
    \item Alice determines the first component of $(M,H)^{m+n} = (M,H)^n(M,H)^m = (B, P_B)(A, P_A)$ 
          from her knowledge of $A, P_A,$ and $B$ (knowledge of $P_B$ is not necessary
          for either of the operations \eqref{eq:op1} or \eqref{eq:op2}.
    \item Bob similarly determines the first component of $(M,H)^{m+n} = (M,H)^m(M,H)^n = (A, P_A)(B, P_B)$ 
          from his knowledge of $B, P_B,$ and $A$.
  \end{enumerate}

  In the next section, we show that an eavesdropper can find a positive integer $m'$
  for which the first component of $(M,H)^{m'}$ is $A$; she can then use this $m'$
  to compute the shared secret key in essentially the same way as Alice.
  Furthermore, such an $m'$ can be found using $\O(K^2)$ operations
  \eqref{eq:op1} or \eqref{eq:op2}.

\Section{The attack}        \label{sec:attack}

  Since addition of matrices in $\M$ is idempotent, i.e., $G\oplus G = G$,
  we have a partial order on $\M$ defined by
  \[
    X \le Y \hspace{12pt}\mbox{ if } X\oplus Y = X.
  \]
  Clearly we have that $X\le Y$ iff $x_{ij} \le y_{ij}$ for all $i,j\in\{1,2,\dots, k\}$.
  Furthermore, this partial order respects both operations on $\M$; if
  $X\le Y$ and $Z\in\M$, then $X\oplus Z \le Y\oplus Z$ and $X\otimes Z \le Y\otimes Z$.

  \begin{prop}
    Consider the semigroup $\M\times \M$ equipped with either of the two operations
    defined by \eqref{eq:op1} and \eqref{eq:op2}. 
    Let $(M,H)\in \M\times\M$,
    and for each positive integer $\ell$ let $(M_\ell, H_\ell) = (M,H)^\ell$.
    Then the sequence $\{M_\ell\}$ is monotonically decreasing:
    $M_1\ge M_2 \ge M_3 \ge \dots$.
  \end{prop}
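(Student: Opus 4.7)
The plan is to exploit the idempotency of $\oplus$, which yields the general inequality $X \oplus Y \le X$ for all $X, Y \in \M$: indeed, $(X \oplus Y) \oplus X = X \oplus Y$, which by the definition of the partial order given in the excerpt means $X \oplus Y \le X$. This reduces the proposition to the observation that, for each operation, the first component of $(M,H)^{\ell+1}$ can be written in the form $M_\ell \oplus (\text{something})$.

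To produce such a decomposition, I would use associativity (which we are entitled to assume, since the proposition is stated for the semigroup $\M\times\M$) to write $(M_{\ell+1}, H_{\ell+1}) = (M, H) \cdot (M_\ell, H_\ell)$, where $\cdot$ denotes whichever of the two operations is under consideration. Substituting $(M, G) \mapsto (M, H)$ and $(S, H) \mapsto (M_\ell, H_\ell)$ into \eqref{eq:op1} yields the first component $M \oplus M_\ell \oplus H_\ell \oplus (M \otimes H_\ell)$, and the analogous substitution in \eqref{eq:op2} yields $(H_\ell \otimes M^T) \oplus (M^T \otimes H_\ell) \oplus M_\ell$. In both cases $M_\ell$ explicitly appears as one of the $\oplus$-summands, so the general inequality from the first paragraph gives $M_{\ell+1} \le M_\ell$, which applied for every $\ell\ge 1$ is the desired monotone chain.

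The one mildly subtle point, which I would flag as the main thing to watch out for, is the order of multiplication: if one instead expanded $(M_\ell, H_\ell) \cdot (M, H)$, then for operation \eqref{eq:op2} the first component would come out as $(H \otimes M_\ell^T) \oplus (M_\ell^T \otimes H) \oplus M$, from which the bound $M_{\ell+1} \le M_\ell$ is not immediately visible. Associativity of course guarantees that the two orderings yield the same element, so the bound still holds; but choosing the more favorable factorization keeps the proof a one-line observation in each case and avoids any incidental manipulation of transposes.
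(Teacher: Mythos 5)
Your proof is correct and takes essentially the same route as the paper: expand one step of the power recursion so that $M_\ell$ appears explicitly as an $\oplus$-summand of $M_{\ell+1}$, then conclude $M_{\ell+1}\le M_{\ell}$ from idempotency of $\oplus$. The only cosmetic difference is that for \eqref{eq:op1} the paper multiplies by $(M,H)$ on the right instead of the left (either factorization works there), while for \eqref{eq:op2} the paper uses exactly the favorable factorization you flag, so your remark about the order of multiplication matches the paper's implicit choice.
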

  \begin{proof}
    Let $\ell \ge 2$.
    For the operation $\circ$ we have
    \begin{eqnarray*}
      (M_\ell, H_\ell) &=& (M_{\ell-1}, H_{\ell-1})\circ (M,H) \\ 
                       &=& \Big( M_{\ell-1}\oplus M\oplus H \oplus (M_{\ell-1}\otimes H), H_{\ell-1}\oplus H\oplus (H_{\ell-1}\otimes H)\Big),
    \end{eqnarray*}
    so that $M_\ell = M_{\ell-1}\oplus M\oplus H \oplus (M_{\ell-1}\otimes H)$.
    In particular, $M_{\ell} \oplus M_{\ell-1} = M_\ell$, and hence $M_\ell \le M_{\ell-1}$.

    Similarly, for the operation $*$ we have that
    \begin{eqnarray*}
      (M_\ell, H_\ell) &=& (M,H)* (M_{\ell-1}, H_{\ell-1}) \\ 
                       &=& \Big( (H_{\ell-1}\otimes M^T)\oplus (M^T\otimes H_{\ell-1})\oplus M_{\ell-1}, H\otimes H_{\ell-1}\Big),
    \end{eqnarray*}
    and hence $M_\ell = (H_{\ell-1}\otimes M^T)\oplus (M^T\otimes H_{\ell-1})\oplus M_{\ell-1}$.
    Again, $M_{\ell}\oplus M_{\ell-1} = M_\ell$, so that $M_\ell \le M_{\ell-1}$.
  \end{proof}

  The problem alluded to at the end of the introduction is now easily solved
  with a binary search. Let $M,H\in\M$ and $(M,H)^\ell = (M_\ell, H_\ell)$.
  Suppose $A\in\M$ satisfies $A=M_m$ for some positive integer $m<2^K$.
  First, obtain an upper bound on $m$ by computing successive squares
  \[
    M_1, M_2, M_4, M_8,\dots
  \]
  until finding a positive integer $t$ for which $A\le M_{2^t}$.
  Since it is then known that $1\le m \le 2^t$, a simple binary
  search will find an integer $m'$ for which $M_{m'} = A$.
  The sequence $M_1,M_2,\dots$ is generally strictly decreasing, in which
  case $m'=m$. However, even if $m'\ne m$, finding such an integer $m'$
  is enough for the eavesdropper to recover the shared secret key.
  Let $\pi_1:\M\times\M \longrightarrow \M$ be the map $\pi_1(C,D)=C$.
  Suppose $(M,H)^n = (B, P_B)$, $(M,H)^m = (A,P_A)$ and $(M,H)^{m'} = (A, P_E)$.
  Then for each of the operations \eqref{eq:op1} and \eqref{eq:op2},
  the shared secret key satisfies
  \[
    \pi_1( (M,H)^{m+n}) = \pi_1( (M,H)^{m'+n}).
  \]
  This is clear, since this shared secret key can be expressed in terms of $A,B$, and $P_B$ only,
  but it may also be explicitly verified. For example, with the operation \eqref{eq:op1},
  \begin{eqnarray*}
    \pi_1( (M,H)^{m+n}) &=& \pi_1( (A,P_A)\circ (B, P_B)) \\ 
              &=& A\oplus B\oplus P_B \oplus (A\otimes P_B) \\
              &=& \pi_1( (A,P_E)\circ (B, P_B) )\\
              &=& \pi_1( (M,H)^{m'+n}).
  \end{eqnarray*}
  In particular, the eavesdropper may recover the shared secret key via
  \begin{eqnarray*}
    \pi_1( (M,H)^{m+n}) &=& \pi_1( (M,H)^n \circ (M,H)^{m'}) \\ 
          &=& \pi_1( (B, P_B)\circ (A,P_E)) \\
          &=& B\oplus A\oplus P_E \oplus (B\otimes P_E).
  \end{eqnarray*}

  Finding $t$ as described above requires at most $K$ semigroup operations in $\M\times\M$.
  The binary search, done in the most obvious way, would compute $K$
  powers of $(M,H)$, each of which requires no more than $2K$ semigroup
  operations in $\M\times\M$, for a total complexity of at most $2K^2+K$ operations
  in $\M\times\M$. This can be reduced to $K^2+K$ by storing the successive
  squares $(M_1,H_1), (M_2,H_2), (M_4, H_4),\ldots$ and using them to compute
  each power of $(M,H)$ during the binary search phase.

  Addition of $k\times k$ matrices can be accomplished with $\O(k^2)$
  integer max operations, and
  multiplication accomplished using $\O(k^3)$ integer addition and max operations.
  Therefore this attack requires $\O(K^2k^3)$ integer operations.
  We argue below that the typical entry of $A$ has about
  $K$ bits. In that case, each integer addition and max operation requires
  no more than $K$ bit operations, for a total of $\O(K^3k^3)$ bit
  operations.
  If we let $\alpha$ denote the number of bits required to represent $A$ (i.e.,
  the key size)
  it follows that $\alpha\approx Kk^2$, and this attack requires $\O(\alpha^3)$
  bit operations, a polynomial-time function of the input size. If $K$ is fixed,
  as in our experiments, then it requires $\O(\alpha^{1.5})$ bit operations.

  We coded this method in C, and performed some experiments on
  a single core of an i7 CPU at 3.10GHz. Using $\M = \Mat_{k\times k}(S)$
  for various values of $k$, and the parameters $N=1000$, $K=200$ suggested
  in \cite{grig19}, we performed 40 experiments for each value of $k$. In each
  experiment, we generated random matrices $M,H$ and chose random positive
  integers $m,n<2^K$ and measured the time to recover an $m'$ as described
  above. The results of these experiments are summarized in Table \ref{tab:exp1}.
  For reference, we also report the average number of bits $\alpha$ in the matrix $A$ 
  that would be shared by Alice, and the values $t/k^3$ and $t/\alpha^{1.5}$ for
  comparison with the asymptotic runtime estimates given above.

  \begin{table}[H]
  \begin{center}\begin{tabular}{|rrrrr|}
    \hline
      $k$ & $\alpha$ & $t$ & $t/k^3$ & $t/\alpha^{1.5}$  \\
     \hline
      5 & 5222 & 0.12 &     0.00096  & 3.2e-7 \\
      10 & 20885 & 0.66 &   0.00066  & 2.2e-7  \\
      15 & 47025 & 2.43 &   0.00072  & 2.4e-7 \\
      20 & 83710 & 4.76 &   0.00060  & 2.0e-7 \\
      25 & 130594 & 10.53 & 0.00067  & 2.2e-7 \\
      30 & 188145 & 17.75 & 0.00066  & 2.2e-7 \\ 
      35 & 256484 & 24.05 & 0.00056  & 1.9e-7 \\
      40 & 334040 & 40.92 & 0.00064  & 2.1e-7 \\
      45 & 422111 & 45.80 & 0.00050  & 1.7e-7 \\
      50 & 523312 & 78.33 & 0.00063  & 2.1e-7 \\
      55 & 631091 & 98.19 & 0.00059  & 2.0e-7\\
      60 & 752490 & 122.57& 0.00057  & 1.9e-7 \\
    \hline 
  \end{tabular}\end{center}
  \caption{Average number of bits $\alpha$ to represent $A$ (Alice's matrix, from Section 1), 
           and average time $t$ (in seconds) to recover $m'$ 
           for various sized ($k\times k$) matrices, with $N=1000$ and $K=200$.}
  \label{tab:exp1}
  \end{table}

  We would like to make one final remark about the key sizes in this system.
  With the notation as above and the operation \eqref{eq:op1}, for example,
  we have
  \[
    M_{\ell+1} = M_\ell\oplus M\oplus H \oplus (M_l\otimes H).
  \]
  Since $M_2\le M$ and $M_2\le H$ and $M_{\ell+1} \le M_2$ for all $\ell\ge 2$, it follows that
  \[
    M_{\ell+1} = M_\ell \oplus (M_\ell \otimes H), \hspace{12pt} \mbox{ for } \ell\ge 2.
  \]
  This means that, on average, the entries of $M_{\ell+1}$ decrease from those of $M_\ell$
  by an approximately constant amount, proportional to the size of the entries of $H$.
  With Alice's $m\approx 2^K$,
  this means that the entries of $A$
  are on the order of $-c\times 2^K$, or about $K$ bits each.
  With the parameter sizes $K=200$, $k=30$, $N\approx 1000$ suggested in
  \cite{grig19}, one would have $M$ and $H$ consisting of about 9000 bits
  each and $A$ with about $30\times 30\times 200 = 180,000$ bits.

\Section{Conclusion}        \label{sec:conclusion}

  The attack presented here 
exploits the fact that the sequence $\{(M,H)^\ell\}$ is linearly ordered.
It is quite effective and practical against the protocols described in \cite{grig19}.
For those protocols, Alice and Bob must do approximately $\O(K)$
operations in the semigroup $\M\times\M$, and this attack requires
about  $\O(K^2)$ operations in that same semigroup, so an increase
of parameter sizes does not help.

  We thank the referees for their thoughtful reading of this manuscript
and their feedback.

\bibliographystyle{plain}

\end{document}